\title{Benefits of V2V Communication for Autonomous and Connected Vehicles}
\author{Swaroop Darbha, Shyamprasad Konduri, Prabhakar R. Pagilla
\thanks{$^{*}$The authors are with the Department of Mechanical Engineering, Texas A \& M University,
        College Station, TX 77843, USA.
        {\tt\small (dswaroop,konduri,ppagilla)@tamu.edu}}
}
\begin{document}
\maketitle

\begin{abstract}
    In this paper, we investigate the benefits of Vehicle-to-Vehicle (V2V) communication for autonomous vehicles and provide results on how V2V information helps reduce employable time headway in the presence of parasitic lags. For a string of vehicles adopting a Constant Time Headway Policy (CTHP) and availing the on-board information of predecessor's vehicle position and velocity, the minimum employable time headway ($h_{\min}$) must be lower bounded by $2\tau_0$ for string stability, where $\tau_0$ is the maximum parasitic actuation lag. In this paper, we quantify the benefits of using V2V communication in terms of a reduction in the employable time headway: (1) If the position and velocity information of $r$ immediately preceding vehicles is used, then $h_{\min}$ can be reduced to ${4\tau_0}/{(1+r)}$; (2) furthermore, if the acceleration of `$r$' immediately preceding vehicles is used, then $h_{\min}$ can be reduced to ${2\tau_0}/{(1+r)}$; and (3) if the position, velocity and acceleration of the immediate and the $r$-th predecessors are used, then $h_{\min} \ge {2\tau_0}/{(1+r)}$. Note that cases (2) and (3) provide the same lower bound on the minimum employable time headway; however, case (3) requires much less communicated information. 
\end{abstract}

\section{Introduction}

Recent technological advances in V2V and Infrastructure-to-Vehicle (I2V) communications leads to the following question in the context of autonomous vehicles (AVs): What traffic safety benefits and congestion relief can be guaranteed through the use of V2V and/or I2V communication? 

The Constant Time Headway Policy (CTHP) \cite{ioannou1993autonomous} is a commonly employed spacing policy in AVs such as those equipped with Adaptive Cruise Control (ACC) systems, wherein the desired following distance is proportional to the speed of the vehicle, the proportionality constant is referred to as the time headway ($h_w$). ACC systems typically rely only on  on-board information; however, employable time headway is lower bounded by $2 \tau_0$, where $\tau_0$ is the maximum value of parasitic actuation lag \cite{swaroop2001cthp}.

A reduction in employable time headway ($h_w$) leads to higher traffic throughput and in the case of truck platooning, can result in better fuel efficiency via drafting. Typical values of time headway considered in truck platooning are in the range of $0.5-1$ s, which at a speed of 65 mph ($\approx 30$ m/s) equate to a physical spacing of $15-30$ m. In truck platooning if $\tau_0$ is typically around $\tau=0.5$ s, the time headway must be at least $1$ s or $30$ m of physical spacing, which not only decreases the highway capacity but also does not provide any noticeable improvements in fuel efficiency \cite{nowakowski2016operational}. 

While AVs employing CTHP may not require communicated information for string stability, the use of V2V and I2V communications when employing CTHP may seem paradoxical and this point has not clearly been articulated in \cite{Llaster2016ASNS}. The necessity for using communication when employing CTHP in trucks was recently brought to the attention of the authors by Ploeg\cite{ploeg2011cacc} who noticed string instability without V2V or I2V communication in a platoon of three trucks when employing a time headway of about 0.35 seconds. Other numerical simulations and experiments also seem to suggest that employing the acceleration or velocity information of immediately preceding vehicle(s) can help reduce the time headway \cite{nowakowski2016operational, rajamani2002saacc, naus2010cacc, orosz2013accfdbk}. We must point out that the topic of communication and its benefit in reducing time headway was first considered for Semi-Autonomous Cooperative ACC in \cite{rajamani2002saacc}; the points of departure of this work from \cite{rajamani2002saacc} are many fold: (1) in this work, we consider architectures involving multiple predecessors, and (2) we do not feed back acceleration of controlled vehicle.  


In order to reduce the employable time headway, modern systems such as the Cooperative Adaptive Cruise Control (CACC) systems utilize vehicular communication to acquire additional information \cite{naus2010cacc}.  In this work, we quantify the benefits of V2V communication in terms of reduction in the employable time headway in the presence of parasitic lags while guaranteeing string stability. For this purpose, we consider a singular perturbation to model parasitic actuation dynamics. 

The advantage of using V2V communication in improving safety of the vehicles in the platoon has been explored in the literature. For example, the use and benefits of V2V for collision avoidance via emergency lane changing in AVs was discussed in \cite{SwaroopYoon1999}. In an emergency braking scenario, inter-vehicular communication aids coordination among vehicles in a platoon leading to a reduction in the probability of a collision, expected number and severity of collisions \cite{DarbhaChoiITS2010}. Similar results can also be found in \cite{tian2016collisions,chakravarthy2009pileup,tak2016tpcc}, where it was reported that  information from preceding vehicles can be used in reducing collisions and pileups on the highways. Some studies have investigated the effect of the limitation of communication on platoon string stability. For example, in \cite{oncu2014cacc}, some design guidelines for  selecting control and network specifications were presented by considering the effect of delays and sampling in communication channels. Longitudinal spacing controllers that use information obtained via communication were also studied in the literature. For instance,  platoons employing a Constant Spacing Policy (CSP) controller and utilizing vehicular communication were shown to be string stable in \cite{swaroop1994phd}.  Recently, use of CACC for doubling the throughput at traffic intersections via platooning was shown possible in \cite{LPTV-TRC2017, AFKV-2017}.

Since it is generally accepted that the CSP will result in higher capacity and lower fuel consumption, one may reflect on how V2V communication will influence a platoon employing the constant spacing policy. String stability with CSP has only been guaranteed when lead vehicle information is communicated to all other vehicles in the platoon; this is burdensome from the viewpoint of communication, especially when the length of the platoon is large. Moreover, string instability will always occur when every vehicle has information from a finite number of vehicle ahead \cite{ifac-2017konduri}. The situation does not improve even when every vehicle has information of finite number of vehicles in its vicinity (both forward and backward); see \cite{tac-2006yadlapalli, ijes-2010darbha}. For this reason, we have only investigated the benefits of V2V communication on improving time headway using the CTHP.

In this paper, we show the following benefits of V2V communication: (1) If the position, velocity and acceleration information of $r$ immediate preceding vehicles is used, then $h_{\min} = {2\tau_0}/{(1+r)}$; furthermore, $h_{\min} = {4\tau_0}/{(1+r)}$ if only the position and velocity of the `$r$' immediately preceding vehicles is used; (2) if the information from the immediate and the $r$-th predecessors are used then $h_{\min} = {2\tau_0}/{(1+r)}$; (3) furthermore, if only the immediate predecessor information is used, then $h_{\min} = {\tau_0}$. 

The rest of the paper is organized as follows. A brief background and problem description are given in Section~\ref{sec:pb}. Section~\ref{sec:cthp} provides the main results pertaining to improvements in the minimum allowable time headway when using a CTHP controller and for the following scenarios of communicated information (position, velocity, acceleration): (1) `$r$' predecessor vehicles; and (2) immediate predecessor and the $r^{th}$ predecessor. Since communicated information from multiple vehicles may overload the network, the benefit of communicated acceleration information just from the immediate predecessor vehicle is also provided in Section~\ref{sec:cthp}. Numerical simulations along with representative sample of the results are discussed in Section~\ref{sec:sim}. Conclusions are presented in Section~\ref{sec:conc}.

\section{Background and Problem Description}\label{sec:pb}
Consider a platoon of homogeneous vehicles where each vehicle is represented as a point mass. In the presence of parasitic lags, the vehicle model may be described as:
\begin{eqnarray}
	\ddx_i &=& a_i, \nonumber \\
	\tau  \dot a_i + a_i &=& u_i.
	\label{eqn:lagdyn}
\end{eqnarray}
where $x_i(t), a_i(t)$  and $u_i(t)$,  respectively, are the position, acceleration and control input of the $i^{th}$ vehicle at time $t$. The parasitic lag $\tau$ is {\it usually unknown} but may be bounded, i.e., $\tau \in [0, \tau_0]$ where $\tau = 0$ corresponds to instantaneous actuation and $\tau_0$ is the maximum possible value of parasitic lag. From the viewpoint of robustness of string stability, any vehicle following law must guarantee string stability for every value of parasitic lag lying between $0$ and $\tau_0$; we refer to this as {\em robust string stability}. 

Equations in \eqref{eqn:lagdyn} represent a simple linear model of a string of homogeneous vehicles where each vehicle is represented as a point mass. This model is extensively used in vehicle control and is reasonable for the following reasons: (i) Feedback linearization is typically employed for lower level control design rendering the model to be linear (I/O linearized) and homogeneous; (ii) most vehicle maneuvers do not require braking or acceleration inputs to attain their limits; and (iii) past experience using this model and with platooning experiments has been satisfactory; for example, platooning experiments at California PATH have been based on these models.

To evaluate the performance of a spacing policy, we define the spacing errors in the following. Let $d$ be the minimum spacing or standstill distance between every pair of successive vehicles in the platoon. Then, the spacing error for the $i$-th vehicle in the case of CSP is defined as 
\[ e_i := x_i - x_{i-1} + d, \] 
and in the case of CTHP it is defined as 
\[ e_i := x_i - x_{i-1} + d+h_w v_i \]
where $h_w$ is the time headway. Since we are considering only CTHP, we will use the latter definition. Consider the following simple CTHP controller \cite{ioannou1993autonomous,swaroop1994phd}, that is employed in ACC:
\begin{equation}
    u_i = -k_v(v_i-v_{i-1})-k_p(x_i-x_{i-1}+d+h_w v_i),
    \label{eqn:cthp}
\end{equation}
where $k_v,k_p$ are positive gains. Notice that the above controller is completely based on the information obtained using the vehicle on-board sensors. The typical platoon setup using such a controller is shown in Fig.~\ref{fig:cthpctrl}. 
\begin{figure}
    \centering
    \includegraphics[width=0.5\textwidth]{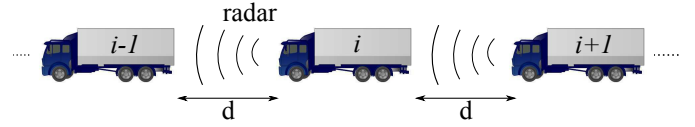}
    \caption{Vehicles equipped with ACC in a platoon where only on-board sensors are used to measure their predecessors' state.}
    \label{fig:cthpctrl}
\end{figure}
Using this controller with the dynamic model \eqref{eqn:lagdyn}, the governing equation for the $i^{th}$ vehicle spacing error is given by
\begin{multline*} 
    \tau \ddde_i + \dde_i+ (k_{v} + k_{p}h_w)\de_i+k_{p}e_i 
    =k_{v}\de_{i-1}+k_{p}e_{i-1}.
\end{multline*}
Let $E_i(s)$ be the Laplace transform of $e_i(t)$. Taking the Laplace transform of the above equation results in the transfer function describing the error propagation (i.e., how the spacing error of the $i^{th}$ vehicle is affected by the spacing error of its predecessor):
\begin{align}
    E_i(s) &= \frac{k_v s+k_p}{\tau s^3+s^2+(k_v+k_ph_w)s+k_p}E_{i-1}(s), \label{eqn:hs} \\
            &:= H(s) E_{i-1}(s), \nonumber
\end{align}
where $H(s)$ is referred to as the spacing error transfer function. 

When every vehicle has information from `$r$' immediate predecessor vehicles, let the control input be
 \begin{multline*}
     u_i = \sum_{l=1}^r [ - k_{vl}(v_i - v_{i-l}) 
                -k_{pl}(x_i - x_{i-l} + d_l + l h_w v_i)],
 \end{multline*}
where the gains $k_{vl}, k_{pl}$ are associated with feeding back velocity and position information of the $l^{th}$ predecessor and $d_l$ is the standstill distance between the $i^{th}$ vehicle and its $l^{th}$ predecessor. The error propagation is governed by 
\begin{equation} 
E_i(s) = \sum_{l=1}^rH_{l}(s)E_{i-l}(s), 
    \label{eqn:rhs}
\end{equation}
where the $H_l(s)$ is given by
\begin{equation*}
    H_{l}(s) = \frac{k_{vl}s+k_{pl}}{\tau s^3+s^2+ \sum_{l=1}^r [\left(k_{vl} + lk_{pl}h_w\right)s+k_{pl}]}.
\end{equation*} 

For such a general error propagation equation, we consider the spacing errors to be states of a spatially discrete system and associate the following characteristic polynomial for the spacing error dynamics when the lead vehicle performs a sinusoidal acceleration maneuver at a frequency $\omega$:
\begin{equation}
    P(z) = z^{r} - \sum_{l=1}^{r}H_l(j\omega)z^{r-l}.
\end{equation}
Let $\rho(P(z; \omega))$ be the spectral radius of $P(z; \omega)$ for a given $\omega$. For string stability, we require that 
\begin{eqnarray}
\label{eqn:ssdef}
\sup_{\omega} \rho(P(z; \omega)) \le 1.
\end{eqnarray} 
In the case of $r=1$, this requirement translates to  $\|H_1(j\omega)\|_{\infty} = \|H(j\omega)\|_{\infty} \le 1$, a frequently imposed frequency-domain condition for string stability. 

A {\em sufficient} condition employed in this paper for ensuring $\rho(P(z; \omega)) \le 1$  is
\begin{equation}
\label{eqn:rssdef}
    \sum_{l=1}^r \|H_l(j\omega) \|_\infty \le 1.
\end{equation}

Define minimum employable time headway ($h_{\min}$) as the minimum time headway for which the platoon is string stable for every $\tau \in [0, \tau_0]$. From \eqref{eqn:hs}, one can show that \cite{swaroop2001cthp}
\begin{equation}
    h_{\min} = 2 \tau_0.
    \label{eqn:hwbasic}
\end{equation} 

In the next section, we show that $h_{\min}$ can be further reduced when vehicular communication is employed. We consider predecessor follower types of information exchange where the communication is unidirectional and the information only flows upstream (the direction of increasing vehicle index). The information flows considered are immediate predecessor (PF), `$r$' immediate predecessors ($r$PF), and immediate and $r$-th predecessors (\rth PF); these are shown in Figure \ref{fig:infoex}.  For parts (b) and (c) of the figure, $r=3$. Notice that when using $r > 1$, in $r$PF, the vehicles with index $i < r$ will use information only from the available predecessors. For instance in part (b) of the Fig.~\ref{fig:infoex}, vehicle 3 uses information only from vehicles 1 and 2 even though $r=3$. Similarly, for the \rth PF case, the \rth-vehicle information is available only to the vehicles with index $i>r$. 
\begin{figure*}
    \centering
   \includegraphics[width=0.8\textwidth]{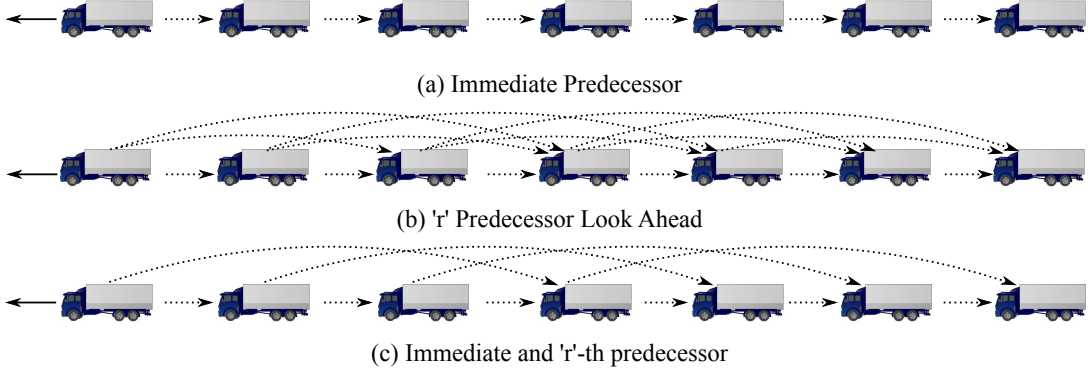}
    \caption{Information flow structures investigated, direction of travel (solid), information flow (dotted). $r=3$ for the structures shown in parts (b) and (c)}
    \label{fig:infoex}
\end{figure*}
\section{CTHP with communicated information}\label{sec:cthp}
We first show that minimum employable time headway can be reduced by using the acceleration of preceding vehicle as in the case of Cooperative ACC. We then show how information from `$r$' immediate predecessors ($r$PF) and the {\rth}PF helps reduce the employable time headway. 

\subsection{CTHP with Immediate Predecessor Acceleration}
Consider a simple control law as follows,
\begin{equation}
u_i = k_aa_{i-1}-k_v(v_i-v_{i-1})-k_p(x_i-x_{i-1}+d+h_wv_i). \label{eqn:cthp-1}
\end{equation}
The governing equation for the $i^{th}$ vehicle spacing error is,
\begin{multline*}
\tau \ddde_i+\dde_i+(k_v+k_ph_w)\de_i+k_pe_i \\
\qquad = k_a\dde_{i-1}+k_v\de_{i-1}+k_pe_{i-1}.
\end{multline*}
Let 
\begin{equation}
    H_e(s) :=  \frac{k_as^2+k_vs+k_p}{\tau s^3 + s^2 + (k_v + k_ph_w)s + k_p},
    \label{eqn:hpf}
\end{equation}
so that errors in maintaining desired following distance propagate as:
$$E_i(s) = H_e(s)E_{i-1}(s). $$
The following theorem quantifies the benefit of using the acceleration information from just the immediate predecessor.

\begin{theorem}
\label{thm:pf}
\begin{itemize}
\itemsep 0.5em
    \item[(a)] $k_a \ge 0$ and $\|H_e(j\omega; \tau)\|_{\infty} \le 1$ for all $\tau \in [0, \tau_0]$ implies $k_a \in (0,1)$ and $h_w \ge \frac{2 \tau_0}{1+k_a}$.
    \item[(b)] Given any $k_a \in (0,1), \eta >0$ and $h_w \ge \frac{2 \tau_0}{1+k_a}$, there exists $k_p, k_v >0$ such that $\|H_e(j\omega; \tau) \| \le 1$ for all $\tau \in [0, \tau_0]$.
\end{itemize}
\end{theorem}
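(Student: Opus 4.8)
The plan is to convert the frequency-domain bound into a real-polynomial inequality and then exploit a cancellation that makes the critical headway appear. Writing $H_e(j\omega;\tau)=N(j\omega)/D(j\omega)$ from \eqref{eqn:hpf} and setting $x:=\omega^2$, the requirement $\|H_e(j\omega;\tau)\|_\infty\le 1$ is equivalent to $|D(j\omega)|^2-|N(j\omega)|^2\ge 0$ for all $\omega$. A direct computation shows this difference factors as $x\,g(x)$, where
\[ g(x) = \tau^2 x^2 + \big[\,1-k_a^2-2\tau(k_v+k_p h_w)\,\big]x + k_p\big[\,2(k_a-1)+2k_v h_w+k_p h_w^2\,\big]. \]
Thus string stability at a fixed $\tau$ is exactly $g(x)\ge 0$ for all $x\ge 0$. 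I would abbreviate the bracketed coefficients as $B(\tau)$ and $C$, noting the key structural fact that the constant term $C$ is independent of $\tau$.

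For part (a) I would first characterize $g\ge 0$ on $[0,\infty)$: as an upward parabola (for $\tau>0$) this holds iff $C\ge 0$ together with either $B(\tau)\ge 0$ or the discriminant condition $B(\tau)^2\le 4\tau^2 C$. The bound $k_a<1$ drops out quickly, since at $\tau=0$ the leading coefficient $1-k_a^2$ must be nonnegative while at $k_a=1$ the vertex value of $g$ evaluates to $-k_v^2<0$ for every $\tau>0$; with the hypothesis $k_a\ge 0$ this gives $k_a\in[0,1)$. The heart of the argument is the identity
\[ B(\tau)^2-4\tau^2 C = 4\tau^2\Big(k_v-\tfrac{1-k_a^2}{2\tau}\Big)^2 + 4\tau(1-k_a)\big[\,2\tau-h_w(1+k_a)\,\big]k_p, \]
in which the $h_w^2$- and $k_p^2$-terms cancel. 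At the worst case $\tau=\tau_0$, if $h_w<2\tau_0/(1+k_a)$ the bracket is positive, so the discriminant is strictly positive for all $k_v,k_p>0$; string stability would then force $B(\tau_0)\ge 0$. But $B(\tau_0)\ge 0$ means $k_v+k_p h_w\le (1-k_a^2)/(2\tau_0)$, and substituting this into $C\ge 0$ (that is, $h_w(2k_v+k_p h_w)\ge 2(1-k_a)$) yields $2(1-k_a)\le h_w(1-k_a^2)/\tau_0$, i.e. $h_w\ge 2\tau_0/(1+k_a)$ — a contradiction. This establishes necessity of the headway bound.

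Part (b) is then constructive and guided by the same identity. I would set $k_v=(1-k_a^2)/(2\tau_0)$, which annihilates the square term, and take any $k_p>0$. For $h_w\ge 2\tau_0/(1+k_a)$ the residual bracket $2\tau_0-h_w(1+k_a)\le 0$ makes $B(\tau_0)^2-4\tau_0^2 C\le 0$, while this same $k_v$ gives $2(k_a-1)+2k_v h_w\ge 0$ and hence $C>0$; therefore $g(x;\tau_0)\ge 0$ for all $x\ge 0$. To upgrade from $\tau_0$ to every $\tau\in[0,\tau_0]$, I would use that, given $C\ge 0$, the condition is equivalent to $1-k_a^2\ge 2\tau\big[(k_v+k_p h_w)-\sqrt{C}\big]$, whose right-hand side is linear in $\tau$ and can only fail for large $\tau$, so verifying it at $\tau_0$ secures the whole interval; the same monotonicity justifies confining the necessity argument to $\tau_0$. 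Any remaining freedom in $k_p$ (and the margin parameter $\eta$) can be used to impose a strict inequality.

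The main obstacle is purely the algebra producing the cancellation identity for $B(\tau)^2-4\tau^2 C$; once that collapse is in hand, both directions are short. A secondary point to treat carefully is the logical ``or'' in the non-negativity criterion — in the necessity proof one must separately eliminate the $B\ge 0$ branch rather than relying on the discriminant alone — and the boundary case $h_w=2\tau_0/(1+k_a)$, where the discriminant vanishes and $g$ is tangent to zero at its vertex (marginal string stability). I would also flag that the analysis naturally yields $k_a\in[0,1)$, with $k_a=0$ degenerating to the baseline $h_{\min}=2\tau_0$; the strict positivity $k_a>0$ in the statement is precisely what turns the bound into a genuine reduction.
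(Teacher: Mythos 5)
Your proposal is correct, and its skeleton matches the paper's: your factorization $|D(j\omega)|^2-|N(j\omega)|^2=\omega^2\,g(\omega^2)$ is exactly the paper's inequality \eqref{eqn:kaperturbed}, and your dichotomy ($B(\tau)\ge 0$ versus discriminant $\le 0$) is its case (A)/(B) split. The genuine differences are in how the branches are used. For necessity, your cancellation identity for $B(\tau)^2-4\tau^2C$ (which checks out) is only implicit in the paper's case (B); it lets you argue by contradiction --- below the critical headway the discriminant is strictly positive, so non-negativity of $g$ on $[0,\infty)$ forces $B(\tau_0)\ge 0$, which together with $C\ge 0$ yields $h_w\ge 2\tau_0/(1+k_a)$ anyway --- whereas the paper derives the bound separately inside each branch. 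For sufficiency the routes really diverge: the paper stays on branch (A), defining the sets $\mathcal{S}_1$, $\mathcal{S}_2$ of gains satisfying the nominal condition and $B(\tau_0)\ge 0$, and proving $\mathcal{S}_1\cap\mathcal{S}_2\ne\emptyset$, which requires the strict margin $\eta>0$; you work on branch (B), choosing $k_v=(1-k_a^2)/(2\tau_0)$ to annihilate the square in the identity so the discriminant is $\le 0$ at $\tau_0$ for any $k_p>0$. Your construction is explicit, works even at the exact boundary $h_w=2\tau_0/(1+k_a)$ (so $\eta$ becomes superfluous, clarifying its otherwise mysterious role in the statement), and your monotonicity observation $1-k_a^2\ge 2\tau\,[(k_v+k_ph_w)-\sqrt{C}\,]$ cleanly collapses the check over all $\tau\in[0,\tau_0]$ to the single check at $\tau=\tau_0$ in both directions. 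Two minor points: you should record the Hurwitz condition \eqref{eqn:kastability} in the sufficiency direction, since the frequency-domain inequality only certifies string stability for a stable $H_e$ --- it is automatic for your gains because $h_w\ge 2\tau_0/(1+k_a)>\tau_0$ gives $k_v+k_ph_w>\tau_0 k_p$, but it needs saying; and your remark that the argument yields only $k_a\in[0,1)$ rather than $(0,1)$ is accurate --- the paper's own proof has the same looseness, since $k_a=0$ simply degenerates to the baseline bound $h_{\min}=2\tau_0$.
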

\begin{remark}\label{rem:thminplainwords} The above theorem can be put in words as follows:
    A platoon with individual vehicle dynamics \eqref{eqn:lagdyn} and the control law  \eqref{eqn:cthp-1} can be made robustly string stable with appropriate choice of $k_v, k_p$ iff $k_a\in (0,1)$ and 
    \begin{equation}
        h_w \ge h_{\min} = \frac{2 \tau_0}{1+k_a}.
        \label{eqn:hwpf}
    \end{equation}
\end{remark}
The proof of the above theorem is provided in Appendix. The proof utilizes the necessary condition for attenuation of errors \eqref{eqn:ssdef} to lower bound the minimum employable time headway. 

\vspace*{0.1in}
\begin{remark}\label{rem:thm1recast}
Note that $k_a \ge 0$; otherwise, the feedforward part of the control law \eqref{eqn:cthp-1} will instruct the following vehicle to brake in response to its predecessor's acceleration and vice-versa; moreover, the lower bound for time headway is worse than without using predecessor's acceleration information. For these reasons, we let $k_a \ge 0$.
\end{remark}

\begin{remark} \label{rem:acclnfeedback}
One could modify the control law \eqref{eqn:cthp} by feeding back acceleration of the controlled vehicle as in \cite{rajamani2002saacc}:
$$u_i = -\bar {k}_a(a_i-a_{i-1}) - \bar{k}_v(v_i - v_{i-1})-\bar{k}_p(x_i-x_{i-1}+d+h_wv_i). $$
In this case, the closed loop error evolution equation satisfies
$$ \tau \dddot e_{i} + (1+\bar{k}_a) \ddot e_i = \bar{k}_a a_{i-1} -\bar{k}_v(v_i-v_{i-1}) - \bar{k}_p(x_i - x_{i-1}+d+h_wv_i), $$
which can be recast as:
\begin{align*} 
\underbrace{\frac{\tau}{1+\bar{k}_a}}_{\bar{\tau}} \dddot e_i + \ddot e_i =& \underbrace{\frac{\bar{k}_a}{1+\bar{k}_a}}_{k_a} a_{i-1} - \underbrace{\frac{\bar{k}_v}{1+\bar{k}_a}}_{k_v}(v_i - v_{i-1})\\ &- \underbrace{\frac{\bar{k}_p}{1+\bar{k}_a}}_{k_p}(x_i - x_{i-1} + d + h_w v_i). 
\end{align*}
Note that the recast equation corresponds to CTHP without acceleration feedback of the controlled vehicle, but with modified actuation lag and gains. The proposed methodology can therefore be directly used to analyze this situation as well. 

One may apply the results of Theorem 1 to conclude that 
$$h_w \ge \frac{\frac{2 \tau_0}{1+k_a}}{1+ \frac{k_a}{1+k_a}} = \frac{2 \tau_0}{1+2k_a},$$ and $k_a$ can be chosen to be arbitrarily large as in \cite{rajamani2002saacc}; however, in practice, $k_a$ is limited and one cannot maintain an arbitrarily small time headway with just predecessor information -- clearly, in the limiting case of zero time headway when it reduces to the constant spacing policy, it is impossible to maintain string stability with just predecessor's information. The bound on maximum value of $k_a$ can be determined from a more refined model of parasitic actuation/sensing. The first order singular perturbation model of parasitic dynamics is a simple representation; however, if a third order model were to be chosen instead, there is a maximum value of $k_a$ for which acceleration feedback of controlled vehicle leads to instability. Once the maximum acceleration feedback gain, $k_a$ is determined, Theorem 1 produces a lower bound on the minimum employable time headway.
While acceleration feedback was considered in \cite{rajamani2002saacc}, the bound on the time headway as a function of acceleration feedback gain can be inferred from Theorem \ref{thm:pf}.
\end{remark}
Theorem \ref{thm:pf} will be handy when studying other information architectures in the following subsections. 

\subsection{CTHP with Information from `$r$' Predecessors}
We consider the following generalization of the CTHP control law (\ref{eqn:cthp}) with position, velocity and acceleration information from `$r$' predecessor vehicles: 
 \begin{align}
     u_i(t) &= \sum_{l=1}^r \left[k_{al}a_{i-l}(t)-k_{vl}(v_i(t) - v_{i-l}(t)) \right. \nonumber \\
        & \ \ \ \left. -k_{pl}(x_i(t) - x_{i-l}(t) + d_l + l h_w v_i(t))\right], \label{eqn:cthp-general}
 \end{align}
where the gains $k_{al},k_{vl}$, and $k_{pl}$ are associated with feeding back the acceleration, velocity and position information associated with the $l^{th}$ predecessor and $d_l$ is the standstill distance between the $i^{th}$ vehicle and its $l^{th}$ predecessor. The above generalized control requires information that can be obtained only via vehicular communication. We investigate whether it is possible for the platoon to be robustly string stable while reducing the minimum employable time headway. 

Substituting the control law \eqref{eqn:cthp-general} into \eqref{eqn:lagdyn}, the governing equation for the $i^{th}$ vehicle spacing error is given by
\begin{multline*} 
    \tau \ddde_i + \dde_i+ \sum_{l=1}^r[\left(k_{vl} + lk_{pl}h_w\right)\de_i+k_{pl}e_i] \\
    \qquad =\sum_{l=1}^r(k_{al}\dde_{i-l}+k_{vl}\de_{i-l}+k_{pl}e_{i-l}).
\end{multline*}
The propagation of the spacing error to vehicle $i$ from $r$ predecessor vehicles is given by
\[ E_i(s) = \sum_{l=1}^rH_{pl}(s)E_{i-l}(s), \] 
where
\begin{equation}
    H_{pl}(s) = \frac{k_{al}s^2+k_{vl}s+k_{pl}}{\tau s^3+s^2+ \sum_{l=1}^r [\left(k_{vl} + lk_{pl}h_w\right)s+k_{pl}]}.
    \label{eqn:rtf}
\end{equation} 
The transfer function $H_{pl}(s)$ describes the effect of the spacing error in the $(i-l)^{th}$ vehicle on the spacing error of the $i^{th}$ vehicle. 
Since our objective is to demonstrate the benefits of V2X communication with some controller obeying the architecture considered, it suffices to choose the same set of gains so that we may use the result in Remark~\ref{rem:thm1recast}. For all $l$, let $k_{al} = k_a, k_{vl} = k_v, k_{pl} = k_p$, so that we may define $$H_0(s) := \frac{k_a s^2+k_vs+k_p}{\tau s^3+s^2+ (rk_v + \frac{r(r+1)}{2}k_ph_w)s + rk_p}, $$  
and the error propagation may be described by:
$$E_i(s) = H_0(s) \sum_{l=1}^r E_{i-l}(s). $$
We may now use the interpretation of Theorem 1 given in Remark 1 and the sufficient condition given by equation \eqref{eqn:rssdef} to get the following result:
\begin{theorem}\label{thm:rv}
A platoon with individual vehicle dynamics \eqref{eqn:lagdyn} and each vehicle receiving information from `$r$' predecessors as given by the control action in \eqref{eqn:cthp-general}, where $k_{al} = k_a$ and $rk_a \in (0,1)$, $k_{vl} = k_v$, $k_{pl}=k_p$, is robustly string stable when
    \begin{equation}
        h_{\min} = \frac{4 \tau_0}{(1+r)(1+rk_a)}.
        \label{eqn:hwrpf}
    \end{equation}
\end{theorem}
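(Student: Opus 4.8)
The plan is to reduce the claim to Theorem~\ref{thm:pf} by a change of variables. Since the paper fixes identical gains across all predecessors, $k_{al}=k_a$, $k_{vl}=k_v$, $k_{pl}=k_p$, every transfer function $H_{pl}(s)$ in \eqref{eqn:rtf} collapses to the single $H_0(s)$ defined just above the theorem. The sufficient condition for robust string stability \eqref{eqn:rssdef} therefore becomes $\sum_{l=1}^r \|H_0(j\omega)\|_\infty = r\,\|H_0(j\omega)\|_\infty \le 1$, i.e.\ $\|r H_0(j\omega)\|_\infty \le 1$. So it suffices to exhibit $k_p,k_v>0$ making $\|rH_0(j\omega;\tau)\|_\infty\le 1$ for every $\tau\in[0,\tau_0]$.

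The key step is to recognize that $rH_0(s)$ has exactly the algebraic form of $H_e(s)$ in \eqref{eqn:hpf}. Writing out
$$ r H_0(s) = \frac{rk_a s^2 + rk_v s + rk_p}{\tau s^3 + s^2 + \left(rk_v + \tfrac{r(r+1)}{2}k_p h_w\right)s + rk_p}, $$
I would match this against $H_e$ by setting transformed gains $\tilde k_a := rk_a$, $\tilde k_v := rk_v$, $\tilde k_p := rk_p$, the same lag $\tilde\tau := \tau$, and an effective time headway $\tilde h_w$. Equating numerators is immediate; equating the $s^1$ coefficients of the denominators forces $\tilde k_v + \tilde k_p \tilde h_w = rk_v + \tfrac{r(r+1)}{2}k_p h_w$, and since $\tilde k_v = rk_v$, $\tilde k_p = rk_p$, this yields $\tilde h_w = \tfrac{r+1}{2}\,h_w$. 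Thus $rH_0(s) = H_e(s;\tilde\tau)$ with the tilde parameters, and $k_v,k_p>0$ iff $\tilde k_v,\tilde k_p>0$ because $r>0$.

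With the correspondence in hand, I would invoke the sufficiency direction of Theorem~\ref{thm:pf} (equivalently the ``if'' part of Remark~\ref{rem:thminplainwords}). The hypothesis $rk_a\in(0,1)$ is precisely $\tilde k_a\in(0,1)$, so Theorem~\ref{thm:pf}(b) guarantees $\tilde k_p,\tilde k_v>0$ with $\|H_e(j\omega;\tilde\tau)\|_\infty\le 1$ for all $\tilde\tau=\tau\in[0,\tau_0]$, provided $\tilde h_w \ge \tfrac{2\tau_0}{1+\tilde k_a} = \tfrac{2\tau_0}{1+rk_a}$. Translating back through $\tilde h_w = \tfrac{r+1}{2}h_w$ gives $h_w \ge \tfrac{4\tau_0}{(1+r)(1+rk_a)}$, which is exactly the threshold \eqref{eqn:hwrpf}; any $h_w$ at or above it yields $r\|H_0\|_\infty\le 1$ and hence robust string stability via \eqref{eqn:rssdef}.

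I expect the only real obstacle to be the coefficient bookkeeping that produces the effective headway $\tilde h_w=\tfrac{r+1}{2}h_w$: it is the factor $\tfrac{r(r+1)}{2}$ in the denominator of $H_0$ (the sum $\sum_{l=1}^r l = \tfrac{r(r+1)}{2}$ from the position-error terms) that, after dividing by $\tilde k_p = rk_p$, supplies the $\tfrac{r+1}{2}$ inflation of the headway and ultimately the extra factor $(1+r)$ in the denominator of $h_{\min}$. Everything else is a direct application of Theorem~\ref{thm:pf}. Note this argument establishes only sufficiency of the stated headway under the condition \eqref{eqn:rssdef}; strictly speaking it certifies that the displayed value is achievable as $h_{\min}$ for this gain choice, rather than ruling out a smaller headway under the weaker necessary condition \eqref{eqn:ssdef}.
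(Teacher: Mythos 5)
Your proposal is correct and follows essentially the same route as the paper's own proof: both rescale $rH_0(s)$ into the form of $H_e(s)$ via the substitutions $\bar k_a = rk_a$, $\bar k_v = rk_v$, $\bar k_p = rk_p$, $\bar h_w = \tfrac{r+1}{2}h_w$, and then invoke the sufficiency part of Theorem~\ref{thm:pf} together with the condition \eqref{eqn:rssdef} to obtain $h_w \ge \tfrac{4\tau_0}{(1+r)(1+rk_a)}$. Your closing caveat---that this certifies sufficiency of the stated headway rather than a matching necessity under \eqref{eqn:ssdef}---is an accurate reading of what the paper's argument establishes as well.
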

\begin{proof}
Consider
\[ rH_0(s)= \frac{rk_as^2 + rk_vs + rk_p}{\tau s^3+ s^2+ (rk_v + \frac{r(r+1)}{2}k_p h_w) s + r k_p}. \]
Define $\bar k_a:= r k_a, \; \bar k_v = r k_v, \; \bar k_p = r k_p, \; \bar h_w = \frac{r+1}{2} h_w$, and $\bar H_0(s) = r H_{0}(s)$, then
\[ \bar H_0(s) = \frac{\bar k_as^2+ \bar k_v s+ \bar k_p}{\tau s^3+ s^2 + (\bar k_v + \bar h_w \bar k_p)s +\bar k_p}. \]
Comparing $H_e(s)$ from Theorem~\ref{thm:pf} with  $\bar H_{0}(s)$, we can conclude that for every ${\bar k_a} \in [0,1)$, there exist a set of gains $({\bar k}_v, {\bar k}_p)$ such that $\|r H_0(j \omega)\|_{\infty} \le 1$ for every $\tau \in [0, \tau_0]$ if 
\begin{eqnarray*}
{\bar h}_w &\ge& \frac{2 \tau_0}{1+{\bar k}_a}, \\
\Rightarrow 
h_w &\ge& \frac{2}{(1+r)} \frac{2\tau_0}{(1+rk_a)} 
= \frac{4 \tau_0}{(1+r)(1+rk_a)}. 
\end{eqnarray*}
\end{proof}
\subsection{CTHP with Immediate and \rth Predecessor Information}
Practical considerations on the communication bandwidth may force each vehicle to pick only a few predecessors to maintain reliable communication; in such situations, one may want to use the immediate predecessor and a second predecessor (\rth \ vehicle) from the downstream of the platoon. When using immediate and \rth-predecessor information in the feedback, the control law can be rewritten as,
 \begin{align}
     u_i(t) &= \sum_{l=1,r} \left[k_{al}a_{i-l}(t)-k_{vl}(v_i(t) - v_{i-l}(t)) \right. \nonumber \\
        & \ \ \ \left. +k_{pl}(x_i(t) - x_{i-l}(t) + d_l + l h_w v_i(t))\right]. \label{eqn:cthp-rth}
 \end{align}
The above control law is a special case of the $r$ vehicle look ahead control law in ~\eqref{eqn:cthp-general} with two vehicle feedback, where the second vehicle is the $r^{th}$  vehicle. 
If $k_{a1} = k_{ar} = k_a \in (0, \frac{1}{2}), \; \; k_{v1} = k_{vr} = k_v,$ and $k_{p1} = k_{pr} = k_p$, define 
$$H_0(s) := \frac{k_a s^2+k_vs+k_p}{\tau s^3+s^2+ (2k_v + 3k_ph_w)s + 2k_p}, $$  
so that the error propagation is given by:
$$E_i(s) = H_0(s)E_{i-1}(s)+H_0(s)E_{i-r}(s). $$
From the sufficient condition given by equation \eqref{eqn:rssdef}, robust string stability can be guaranteed if $2 \|H_0(j \omega)\|_{\infty} \le 1$. 
The proof of the following result is analogous to the proof of  Theorem~\ref{thm:rv}.
\begin{corollary}\label{cor:rthpf}
A platoon with individual vehicle dynamics \eqref{eqn:lagdyn} and the control law  \eqref{eqn:cthp-rth}, where  $k_{a1} = k_{ar} = k_a \in (0, \frac{1}{2}), \; \; k_{v1} = k_{vr} = k_v, \; k_{p1} = k_{pr} = k_p$, is robust string stable when 
\begin{equation}
    h_{\min} = \frac{4 \tau_0}{(1+r)(1+2k_a)}.
    \label{eqn:rthbound}
\end{equation}
\end{corollary}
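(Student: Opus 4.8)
The plan is to reduce Corollary~\ref{cor:rthpf} to a direct application of Theorem~\ref{thm:pf}, mirroring the proof of Theorem~\ref{thm:rv}. First I would observe that with the common gains $k_{a1}=k_{ar}=k_a$, $k_{v1}=k_{vr}=k_v$, $k_{p1}=k_{pr}=k_p$, only two of the transfer functions in the error-propagation sum are nonzero, and they coincide, namely $H_1 = H_r = H_0$. Hence the sufficient condition \eqref{eqn:rssdef} collapses from $\sum_l \|H_l(j\omega)\|_\infty \le 1$ to the single scalar requirement $2\,\|H_0(j\omega)\|_\infty \le 1$ for all $\tau\in[0,\tau_0]$. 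It therefore suffices to exhibit gains $k_v,k_p>0$ for which this holds whenever $h_w$ meets the stated bound.

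Next I would absorb the factor $2$ into rescaled gains so that $2H_0$ takes the canonical form $H_e$ of \eqref{eqn:hpf}. Writing out $2H_0(s)$, the numerator becomes $2k_a s^2 + 2k_v s + 2k_p$, which I rename $\bar k_a s^2 + \bar k_v s + \bar k_p$ with $\bar k_a = 2k_a$, $\bar k_v = 2k_v$, $\bar k_p = 2k_p$; the denominator reads $\tau s^3 + s^2 + (2k_v + (1+r)k_p h_w)s + 2k_p$. The coefficient $(1+r)$ of $k_p h_w$ here is the crux: it is the sum $1 + r$ of the two predecessor indices $l\in\{1,r\}$. Matching the first-order denominator coefficient to the canonical $\bar k_v + \bar h_w \bar k_p$ forces the effective time headway $\bar h_w = \tfrac{1+r}{2}h_w$, after which $2H_0(s) = H_e(s)$ verbatim, with barred gains, headway $\bar h_w$, and the same actuation lag $\tau$.

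Finally I would invoke the iff characterization of Theorem~\ref{thm:pf} as restated in Remark~\ref{rem:thminplainwords}: for an appropriate $(\bar k_v,\bar k_p)$ one has $\|H_e(j\omega;\tau)\|_\infty\le 1$ for all $\tau\in[0,\tau_0]$ precisely when $\bar k_a\in(0,1)$ and $\bar h_w \ge \tfrac{2\tau_0}{1+\bar k_a}$, with the existence of the requisite positive gains supplied by part (b). Translating back through the substitutions, $\bar k_a\in(0,1)$ is exactly $k_a\in(0,\tfrac12)$, and
\[ \bar h_w \ge \frac{2\tau_0}{1+\bar k_a} \;\Longleftrightarrow\; \frac{1+r}{2}h_w \ge \frac{2\tau_0}{1+2k_a} \;\Longleftrightarrow\; h_w \ge \frac{4\tau_0}{(1+r)(1+2k_a)}, \]
which yields the claimed $h_{\min}$.

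I expect the only real obstacle to be bookkeeping rather than substance. I must check that the rescaling leaves the leading terms $\tau s^3 + s^2$ of the denominator invariant (they are untouched, since the factor of $2$ is carried entirely by the numerator and by the constant and linear denominator terms that define $\bar k_p,\bar k_v$), and that matching the linear coefficient indeed produces the multiple $\tfrac{1+r}{2}$. I would also flag a consistency check: the $H_0$ displayed just before the corollary lists the linear denominator coefficient as $2k_v + 3k_p h_w$, corresponding to $1+r=3$; for general $r$ this should read $2k_v + (1+r)k_p h_w$, and it is this $(1+r)$ that propagates into the final bound.
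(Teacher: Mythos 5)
Your proof is correct and follows exactly the route the paper intends: the paper states that the corollary's proof is ``analogous to the proof of Theorem~\ref{thm:rv},'' and your rescaling $\bar k_a = 2k_a$, $\bar k_v = 2k_v$, $\bar k_p = 2k_p$, $\bar h_w = \tfrac{1+r}{2}h_w$ followed by an appeal to Theorem~\ref{thm:pf}(b) is precisely that analogy carried out. Your flag about the displayed $H_0$ is also right---the linear denominator coefficient must read $2k_v + (1+r)k_p h_w$ (the paper's $3k_p h_w$ is a typo, valid only for $r=2$), and it is exactly this $(1+r)$ that produces the factor $(1+r)$ in \eqref{eqn:rthbound}.
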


The following observations are made based on the results of Theorems~\ref{thm:pf}, \ref{thm:rv} and its corollary. These observations may be helpful in making design choices for implementing CACC systems with information from multiple vehicles in the feedback.
\begin{enumerate}
    \item[(I)] When information from only immediate predecessor is used, and acceleration feedback gain is selected to be zero, the lower bound in inequality \eqref{eqn:hwpf} reduces to $h_{min}$ given in equation \eqref{eqn:hwbasic}.
    \item[(II)] In the immediate predecessor feedback case if the acceleration feedback gain $k_a$ is chosen arbitrarily close to one, the minimum employable time headway can be chosen close to $\tau_0$; this can be inferred from equation \eqref{eqn:hwpf}. Hence, choosing $k_a \approx 1$, the lower limit on $h_w$ can be nearly halved and the platoon can be string stable for any headway greater than the maximum parasitic lag.
    %
    \item[(III)] If information from two predecessor vehicles ($r=2$) is utilized with equal gains selected for position, velocity and acceleration feedback, and if the acceleration feedback gains are chosen such that their sum is close to unity, then the minimum employable time headway can be chosen close to $\frac{2\tau_0}{3}$; this is also corroborated in \cite{orosz2014accfdbk} via numerical simulations. 
    \item[(IV)] For $r = 3$, when equal gains are selected for position and velocity feedback, there are two possible scenarios: (1) when $k_a=0$ for every vehicle and (2) when $rk_a$ is close to one. For the former, the minimum time headway can be chosen close to $\tau_0$. For the latter, it can be chosen close to $\tau_0/2$. Thus, every vehicle only needs to use information from at least three predecessor vehicles in order to overcome the limitation imposed by parasitic lag in a vehicle. This is intuitive because if the vehicle has access to information of vehicles downstream, then it has knowledge of future events to come, which is not possible with information from only the immediate predecessor vehicle. 
    \item[(V)] In order to overcome the handicap of not having acceleration information of preceding vehicles, one must have velocity and position information of more preceding vehicles. For example, comparing the case $r=2$ with acceleration feedback as considered in (III) with the case $r=3$ without acceleration feedback as considered in case (1) of (IV), the importance of acceleration feedback becomes clear. To match the reduction in time headway obtained in (III), every vehicle must have the velocity and position information of five predecessor vehicles.  
    %
    %
    %
\end{enumerate}
\begin{table}
\centering
\begin{tabular}{|l|c|}
	\hline
	Communication Type & $h_{\min}$ \\ \hline
	Immediate predecessor & $2\tau_0/(1+k_a)$\\ \hline
	$r$ predecessors & ${4\tau_0}/{(1+r)(1+rk_a)}$ \\ \hline
	With $rk_a \approx 1$  & ${2\tau_0}/{(1+r)}$ \\ \hline
	With $rk_a = 0$  & ${4\tau_0}/{(1+r)}$ \\ \hline
	With $rk_a \approx 1,r=2$ & ${2\tau_0}/{3}$ \\ \hline
	With $rk_a \approx 1,r=3$ & ${\tau_0}/{2}$ \\ \hline
	Immediate and $r^{th}$ predecessor & ${4\tau_0}/{(1+r)(1+2k_a)}$ \\ \hline
	With $2k_a \approx 1$ & ${2\tau_0}/{(1+r)}$ \\ \hline
\end{tabular}
\caption{Communication type vs the minimum employable time headway.}
\label{tab:benefits}
\end{table}
The communication models explored and their respective minimum employable time headway are summarized in Table~\ref{tab:benefits}.
While theoretically using either large number of vehicles or the \rth \ predecessor in the feedback improves the capacity of the highway by reducing the lower bound on the minimum employable time headway, this may cause additional communication overhead. Furthermore, the current state of the art uses near field communication technology \cite{nowakowski2016operational} which imposes severe restriction on the value of $r$ that can be used. In light of this, a good compromise will be to use information from two or three predecessor vehicles that are close to the vehicle. For example, one can use information from the two immediate predecessors or the immediate and the $3^{rd}$ predecessor. In the cases where communication bandwidth is restricted, using only information from the immediate predecessor will also provide considerable benefits over using information available from just the on-board sensors.

\subsection{Non-negativity of impulse response for string stability}
The error propagation equation described by \eqref{eqn:rhs} is constrained so that $\sum_{l=1}^r H_l(0) = 1$. When $r=1$, this constraint implies $H(0) = 1$ in equation \eqref{eqn:hs}.
Let $h(t)$ denote the impulse response of $H(s)$. An alternate definition of  string stability stems from the requirement that spacing errors must not amplify as we move upstream (increasing index) along the platoon. For $r=1$, we readily have:
\[
\|e_{i}\|_{\infty} \leq  \|h\|_1 \|e_{i-1}\|_{\infty},
\]
and hence, $\|h\|_1 \leq 1$ is also used as a criterion for string stability. As,
\[ 1= |H(0)| \le \|H\|_\infty \le \|h\|_1. \]
If $h(t) \ge 0$, then
\[ 1= |H(0)| = \|H\|_\infty = \|h\|_1. \]
Hence, the condition $\|h\|_1 \leq 1$ is equivalent to the frequency domain condition  
\begin{equation}
    \|H(j\omega)\|_\infty \le 1, \ \mbox{when} \ h(t) \ge 0, \ \forall t.
    \label{eqn:ssdefimp}
\end{equation}
For the multiple vehicle look-ahead case, from equation \eqref{eqn:rhs}, one can show from I/O properties of linear systems that:
\[ \|e_i\|_{\infty} \le \sum_{l=1}^r \|h_l(t)\|_1 \|e_{i-l}(t)\|_{\infty}, \]
where $h_l(t)$ is the unit impulse response of $H_l(s)$.  In this case, a sufficient condition for errors not to amplify geometrically (at least asymptotically) is that $\sum_{l=1}^r \|h_l\|_1 \le 1$; together with the constraint that $\sum_{l=1}^r H_l(0) =1$ which can only be satisfied if $h_l(t) \ge 0$ for every $l$. Hence, string stability of the platoon can be investigated by studying the peak magnitude of $H(s)$ and imposing an additional requirement on the non-negativity of the impulse response. 

Even for the case of $r=1$, the additional requirement of non-negativity of impulse response renders the problem difficult, as one must prove the following counterpart of Remark 1: Given $k_a \in (0,1)$, there exist $k_p, k_v$ such that $h(t) \ge 0$ for every $\tau \in [0, \tau_0]$ whenever $h_w \ge \frac{2\tau_0}{1+k_a}$. This seemingly simple problem is analytically difficult to solve and is related to the open problem of finding a fixed structure controller satisfying a transient specification (namely, the impulse response of the transfer function is non-negative).  

A transformation involving scaling with respect to $\tau_0$ of the above problem leads it to a standard form (involving one less variable) where proving the following result suffices: Given $k_a \in (0,1)$ and $h_w \ge \frac{2}{1+k_a}$, there exist $k_p, k_v$ such that $h(t) \ge 0$ for every $\tau \in [0, 1]$. Later, we provide a set of gains $k_p, k_v$ for a given $k_a$ that results in $h(t)$ being non-negative. The purpose of showing non-negativity (at least numerically) is to show that even if one were to choose another criterion for string stability, the results presented in this paper will continue to hold.

The basic idea of the transformation is as follows: Let $s = s'/\tau_0$, then the error propagation transfer function becomes,
\[ H_e(s'/\tau_0) = \frac{k_as'^2/\tau_0^2+k_vs'/\tau_0+k_p}{\tau s'^3/\tau_0^3 + s'^2/\tau_0^2 + (k_v + k_ph_w)s'/\tau_0 + k_p}.\]
Multiplying both the numerator and the denominator with $\tau_0^2$ results in,
\[ H_e(s'/\tau_0) = \frac{k_as'^2+k_vs'\tau_0+k_p\tau_0^2}{\tau s'^3/\tau_0 + s'^2 + (k_v + k_ph_w)s'\tau_0 + k_p\tau_0^2}.\]
Let $\tilde \tau = \tau/\tau_0$, $\tilde k_p = k_p\tau_0^2$, $\tilde k_v = k_v\tau_0$, $\tilde h_w = h_w/ \tau_0$ then

\begin{equation} 
    \tilde H_e(s') := H_e(s'\tau_0) = \frac{k_as'^2+\tilde k_vs'+\tilde k_p}{\tilde \tau s'^3 + s'^2 + (\tilde k_v + \tilde k_p \tilde h_w)s' + \tilde k_p},
    \label{eqn:tildehe}
\end{equation}
where $\tilde \tau \in [0,1]$ and $\tilde h_w = 2/(1+k_a)$. The above representation considerably simplifies the analysis and it suffices to show that there exist gains $\tilde k_v$, $\tilde k_p$ such that $\tilde h_e(t) \ge 0$. 

Since $\tilde H_e(s')$ is of the same form as $H(s)$ in Theorem 1, we can relabel all the variables ($\tilde k_v$ to be $k_v$, $\tilde k_p$ to be $k_p$, $\tau_0= 1$ etc) and use them interchangeably. In the following, we will focus on determining $k_p, k_v$ for a given $k_a$ when $h_w \ge \frac{2 \tau_0}{1+k_a}$. In this connection, we will use two results available in the literature:

\begin{enumerate}
\item When $\tau = 0$, let $-z_1$, $-z_2$ and $-p_1$, $-p_2$ denote the real and distinct location of zeros and poles of the above transfer function, respectively. Further let, $z_1 < z_2$ and $p_1 < p_2$. The impulse response is non-negative if \cite{swaroop1994comparision},
\begin{equation}
    p_1 \le \frac{\tilde h_w \tilde k_p}{1-k_a} \le p_1+p_2.
    \label{eqn:impt0}
\end{equation}
\item When $\tilde \tau \in (0,1]$, let the three real and distinct poles of the transfer function be located at $-p_1$, $-p_2$, and $-p_3$, and let $p_3>p_2>p_1$. Then, the impulse response of the system is,
\[ \tilde h_e(t) = c_1e^{-p_1t}+c_2e^{-p_2t}+c_3e^{-p_3t},\]
where $c_1,c_2$ and $c_3$ are the residues obtained from the partial fraction expansion of the transfer function $\tilde H_e(s')$. The impulse response is non-negative if the residues satisfy \cite{lin1997nnir}:
\begin{equation}
    c_1 \ge 0, c_2 < 0 \mbox{ and } c_3 > \frac{p_2-p_1}{p3-p1}c_2.
    \label{eqn:impt}
\end{equation}
Thus, for a given $k_a$, $\tau_0$ and $\tilde h_w = 2/(1+k_a)$, any set of $\{\tilde k_p, \tilde k_v\}$ that satisfy the relations in \eqref{eqn:impt0} and \eqref{eqn:impt} will guarantee $\tilde h_e(t) \ge 0$. 
\end{enumerate}
In the case of $k_a = 0.95$,  $\tau_0 =1$, the following gains seem to indicate numerically that $h_e (t) \ge 0$:
\[ \{k_v,k_p\} = \{0.082, 0.001\}.\]
Furthermore, these gains can be scaled back for any other $\tau_0$ using $\tilde k_p = k_p \tau_0^2$ and $\tilde k_v = k_v \tau_0$ to guarantee non-negative impulse response of $H_e(s)$. 

From the construction of the proof of Theorem 2 and its corollary, it is clear that demonstrating the benefits of V2X communication with multiple vehicle look-ahead even with the additional non-negativity requirement will reduce to showing the benefits for the single vehicle look-ahead information, which was the focus of the discussion in this section. 

\section{Numerical Simulations}\label{sec:sim}
In this section, we discuss numerical simulations that corroborate the results of Theorem~\ref{thm:rv} and its corollaries. The CTHP controller from \eqref{eqn:cthp-general} is considered for the simulations.
The numerical values for the common parameters are given in Table~\ref{tab:cthp-values}. A sinusoidal disturbance is applied on the lead vehicle between 5 and 10 seconds of the simulation time. Only the odd numbered vehicles are shown in the figures to reduce clutter in the plots. 
\begin{table}[h!]
\centering
\begin{tabular}{ |c|c|c|c|c|c|c|} 
 \hline
 $n$ & $d$ & $\tau_0$ & $k_p$ & $k_v$ & $k_a$ & $v_r$ \\\hline 
 15 & 5 m & 0.5 & 45 & 0.8 & 0.25 & 20 m/s \\ 
 \hline
\end{tabular}
\caption{Numerical values}
\label{tab:cthp-values}
\end{table}
Two time headway cases are considered: (i) $ h_w > h_{\min}$ and (ii) $h_w < h_{min}$. We also considered three values for `$r$': (1) $r=1$, (2) $r=2$, and (3) $r=3$. Figures~\ref{fig:r1} through ~\ref{fig:r3ka} provide the simulation results with the numerical values as given in Table~\ref{tab:cthp-cases}; the column entitled $h_w$ (a) satisfies the lower bound and the column $h_w$ (b) violates the lower bound.
\begin{table}[h!]
\centering
\begin{tabular}{ |c|c|c|c|c|c|c|} 
 \hline
  Figure & $r$ & $k_a$ & $h_{\min}$ & $h_w$ (a) & $h_w$ (b)  \\\hline 
 Fig.~\ref{fig:r1}  & 1 & 0.25 & 0.8 & 0.88 & 0.68 \\ \hline
 Fig.~\ref{fig:r2ka0}  & 2 & 0 & 0.66 & 0.8 & 0.63 \\ \hline
 Fig.~\ref{fig:r2ka}  & 2 & 0.25 & 0.44 & 0.68 & 0.4 \\ \hline
 Fig.~\ref{fig:r3ka0}  & 3 & 0 & 0.5 & 0.6 & 0.47 \\ \hline
 Fig.~\ref{fig:r3ka}  & 3 & 0.25 & 0.28 & 0.5 & 0.27 \\
  \hline
\end{tabular}
\caption{Numerical values corresponding to figures}
\label{tab:cthp-cases}
\end{table}
It is clear from the figures and the numbers in the table above that lower time headway values can be employed when V2V communication is used.

Figure \ref{fig:nnir_ka095} provides position and velocity gain values (with $k_a=0.95$) that correspond to error propagation transfer function subjected to non-negative impulse response requirement. The gain values $\{\tilde k_v, \tilde k_p\}$ were determined using the discussions in Section III.D to ensure $\tilde h_e(t) \ge 0$ when $k_a = 0.95$, $\tau_0 =0.5$ seconds, and $\tilde h_w = 2/(1+k_a) = 1.02$. The approach involved utilization of a range of parameter values (gridding) for $\tilde k_v$ and $\tilde k_p$ and checking if the conditions for non-negative impulse response requirement were met. In Fig.~\ref{fig:nnir_ka095}, the blue dots correspond to $\tilde h_e(t) \ge 0$ when $\tilde\tau=0$; the red dots correspond to real poles for the time-scaled spacing transfer function ($\tilde H_{e}(s)$) and $\tilde\tau \in [0,1]$; the orange circles correspond to the three conditions, $\tilde\tau \in [0,1]$, real poles for $\tilde H_{e}(s)$, and $\tilde h_e(t) \ge 0$. Therefore, in Fig.~\ref{fig:nnir_ka095}, the orange shaded region provides the permissible position and velocity gain values. We have also evaluated the impulse responses for different set of gains and various values of parasitic lags ($\tilde \tau \in [0,1]$) and found them to be all non-negative.  

\begin{figure}[p]
    \centering
     \includegraphics[width=0.45\textwidth]{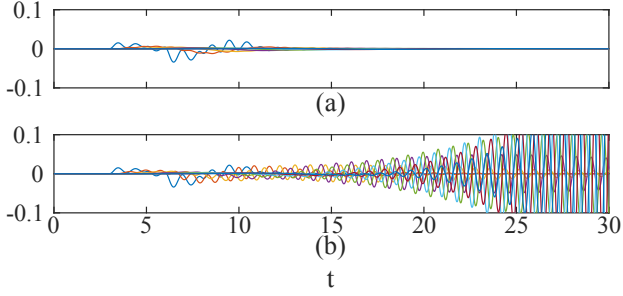}
    \caption{CTHP with $r=1$.}
    \label{fig:r1}
\end{figure}
\begin{figure}
    \centering
     \includegraphics[width=0.45\textwidth]{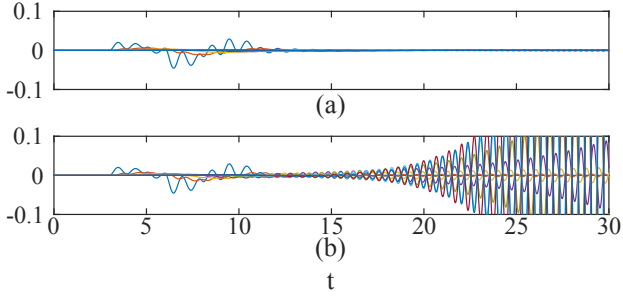}
    \caption{CTHP with $r=2$, $K_a = 0$.}
    \label{fig:r2ka0}
\end{figure}
\begin{figure}
    \centering
     \includegraphics[width=0.45\textwidth]{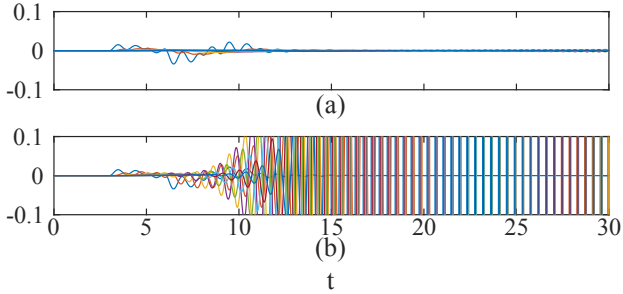}
    \caption{CTHP with $r=2$}
    \label{fig:r2ka}
\end{figure}
\begin{figure}
    \centering
     \includegraphics[width=0.45\textwidth]{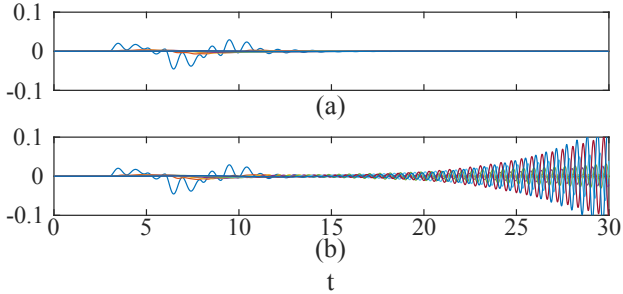}
    \caption{CTHP with $r=3$, $K_a = 0$}
    \label{fig:r3ka0}
\end{figure}
\begin{figure}
    \centering
     \includegraphics[width=0.45\textwidth]{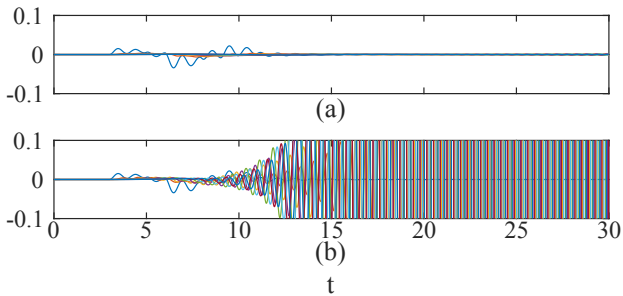}
    \caption{CTHP with $r=3$.}
    \label{fig:r3ka}
\end{figure}
\begin{figure}
    \centering
     \includegraphics[width=0.45\textwidth]{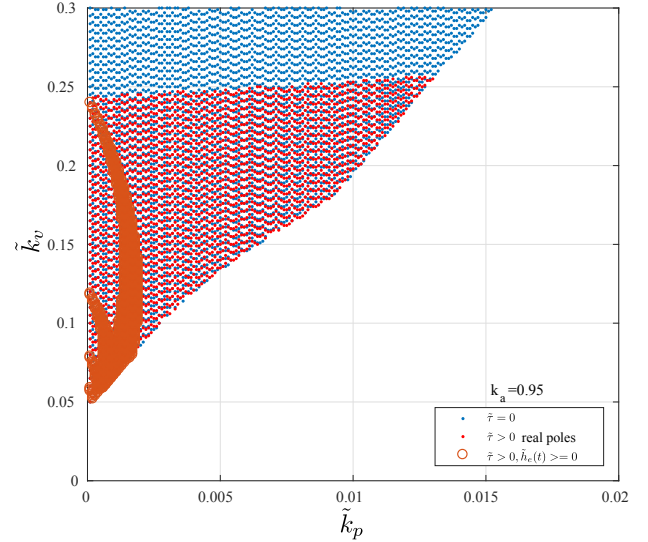}
    \caption{Gains $\tilde k_p,\tilde k_v$ that satisfy $\tilde h_e(t) \ge 0$ when $k_a = 0.95$ and $\tilde h_w = 1.02$.}
    \label{fig:nnir_ka095}
\end{figure}
%

\section{Conclusions}\label{sec:conc}
In this paper, we have studied the benefits of using information obtained via V2X (V2V or V2I) communication on the performance of the autonomous vehicular platoons. We have demonstrated that using a CTHP controller with information from `$r$' predecessor vehicles that the platoon is robustly string stable and further decreases the minimum employable time headway, thereby increasing the capacity of the platoon. Therefore, using V2X communication to feedback information from predecessor vehicles is beneficial provided a proper spacing policy is employed. As discussed in Sections~\ref{sec:cthp} and \ref{sec:sim}, the bound on the time headway may not be tight and there may be a possibility of reducing this further. Hence, a possible future research direction will be to find such a bound.

\bibliography{acc2017-benefits}

\begin{thebibliography}{10}
\providecommand{\url}[1]{#1}
\csname url@samestyle\endcsname
\providecommand{\newblock}{\relax}
\providecommand{\bibinfo}[2]{#2}
\providecommand{\BIBentrySTDinterwordspacing}{\spaceskip=0pt\relax}
\providecommand{\BIBentryALTinterwordstretchfactor}{4}
\providecommand{\BIBentryALTinterwordspacing}{\spaceskip=\fontdimen2\font plus
\BIBentryALTinterwordstretchfactor\fontdimen3\font minus
  \fontdimen4\font\relax}
\providecommand{\BIBforeignlanguage}[2]{{%
\expandafter\ifx\csname l@#1\endcsname\relax
\typeout{** WARNING: IEEEtran.bst: No hyphenation pattern has been}%
\typeout{** loaded for the language `#1'. Using the pattern for}%
\typeout{** the default language instead.}%
\else
\language=\csname l@#1\endcsname
\fi
#2}}
\providecommand{\BIBdecl}{\relax}
\BIBdecl

\bibitem{ioannou1993autonomous}
P.~A. Ioannou and C.~C. Chien, ``Autonomous intelligent cruise control,''
  \emph{IEEE Transactions on Vehicular Technology}, vol.~42, no.~4, pp.
  657--672, 1993.

\bibitem{swaroop2001cthp}
D.~Swaroop and K.~R. Rajagopal, ``A review of constant time headway policy for
  automatic vehicle following,'' in \emph{IEEE Proceedings of Intelligent
  Transportation Systems}, 2001, pp. 65--69.

\bibitem{nowakowski2016operational}
C.~Nowakowski, D.~Thompson, S.~E. Shladover, A.~Kailas, and X.-Y. Lu,
  ``Operational concepts for truck cooperative adaptive cruise control (cacc)
  maneuvers,'' in \emph{Transportation Research Board 95th Annual Meeting}, no.
  16-4462, 2016.

\bibitem{Llaster2016ASNS}
A.~F. Llaster, I. and G.~Fettweis, ``Vehicular communication performance in
  convoys of automated vehicles,'' in \emph{IEEE ICC 2016 Ad-hoc and Sensor
  Networking Symposium}, 2016.

\bibitem{ploeg2011cacc}
J.~Ploeg, A.~F.~A. Serrarens, and G.~J. Heijenk, ``Connect {\&} drive: design
  and evaluation of cooperative adaptive cruise control for congestion
  reduction,'' \emph{Journal of Modern Transportation}, vol.~19, no.~3, pp.
  207--213, 2011.

\bibitem{rajamani2002saacc}
R.~Rajamani and C.~Zhu, ``Semi-autonomous adaptive cruise control systems,''
  \emph{IEEE Transactions on Vehicular Technology}, vol.~51, no.~5, pp.
  1186--1192, Sep 2002.

\bibitem{naus2010cacc}
G.~J.~L. Naus, R.~P.~A. Vugts, J.~Ploeg, M.~J.~G. van~de Molengraft, and
  M.~Steinbuch, ``String-stable cacc design and experimental validation: A
  frequency-domain approach,'' \emph{IEEE Transactions on Vehicular
  Technology}, vol.~59, no.~9, pp. 4268--4279, Nov 2010.

\bibitem{orosz2013accfdbk}
I.~G. Jin, S.~S. Avedisov, and G.~Orosz, ``Stability of connected vehicle
  platoons with delayed acceleration feedback,'' in \emph{Dynamic Systems and
  Control Conference}.\hskip 1em plus 0.5em minus 0.4em\relax ASME, 2013.

\bibitem{SwaroopYoon1999}
D.~Swaroop and S.~M. Yoon, ``Integrated lateral and longitudinal vehicle
  control for an emergency lane change manoeuvre design,'' \emph{International
  Journal of Vehicle Design}, vol.~21, no. 2-3, pp. 161--174, 1999.

\bibitem{DarbhaChoiITS2010}
S.~Darbha and W.~Choi, ``A methodology for assessing the benefits of
  coordination on the safety of vehicles,'' \emph{Journal of Intelligent
  Transportation Systems}, vol.~16, no.~2, pp. 70--81, 2012.

\bibitem{tian2016collisions}
D.~Tian, J.~Zhou, Y.~Wang, Z.~Sheng, H.~Xia, and Z.~Yi, ``Modeling chain
  collisions in vehicular networks with variable penetration rates,''
  \emph{Transportation Research Part C: Emerging Technologies}, vol.~69, pp. 36
  -- 59, 2016.

\bibitem{chakravarthy2009pileup}
A.~Chakravarthy, K.~Song, and E.~Feron, ``Preventing automotive pileup crashes
  in mixed-communication environments,'' \emph{IEEE Transactions on Intelligent
  Transportation Systems}, vol.~10, no.~2, pp. 211--225, June 2009.

\bibitem{tak2016tpcc}
S.~Tak, S.~Kim, and H.~Yeo, ``A study on the traffic predictive cruise control
  strategy with downstream traffic information,'' \emph{IEEE Transactions on
  Intelligent Transportation Systems}, vol.~17, no.~7, pp. 1932--1943, July
  2016.

\bibitem{oncu2014cacc}
S.~Oncu, J.~Ploeg, N.~van~de Wouw, and H.~Nijmeijer, ``Cooperative adaptive
  cruise control: Network-aware analysis of string stability,'' \emph{IEEE
  Transactions on Intelligent Transportation Systems}, vol.~15, no.~4, pp.
  1527--1537, Aug 2014.

\bibitem{swaroop1994phd}
D.~V. A. H.~G. Swaroop, ``String stability of interconnected systems: An
  application to platooning in automated highway systems,'' Ph.D. dissertation,
  University of California, Berkeley, CA, 1994.

\bibitem{LPTV-TRC2017}
J.~Lioris, R.~Pedarsani, and F.~Y. Tascikaraoglu, ``Platoons of connected
  vehicles can double throughput in urban roads,'' \emph{Transportation
  Research Part C: Emerging Technologies}, vol.~77, pp. 292--305, 2017.

\bibitem{AFKV-2017}
A.~Askari, D.~A. Farias, A.~A. Kurzhansky, and P.~Varaiya, ``Effect of adaptive
  and cooperative adaptive cruise control on throughput of signalized
  arterials,'' in \emph{Proceeding of 2017 American Control Conference}, 2017.

\bibitem{ifac-2017konduri}
S.~Konduri, S.~Darbha, and P.~R. Pagilla, ``Vehicle platooning with multiple
  vehicle look-ahead information,'' in \emph{Proceeding of 2017 IFAC World
  Congress}, 2017.

\bibitem{tac-2006yadlapalli}
S.~K. Yadlapalli, S.~Darbha, and K.~R. Rajagopal, ``Information flow and its
  relation to stability of the motion of vehicles in a rigid formation,''
  \emph{Transactions on Automatic Control}, vol.~51, pp. 1315--1319, 2006.

\bibitem{ijes-2010darbha}
S.~Darbha and P.~R. Pagilla, ``Limitations of employing undirected information
  flow graphs for the maintenance of rigid formations for heterogeneous
  vehicles,'' \emph{Intl. Journal of Engineering Science}, vol.~48, pp.
  1164--1178, 2010.

\bibitem{orosz2014accfdbk}
J.~I. Ge and G.~Orosz, ``Dynamics of connected vehicle systems with delayed
  acceleration feedback,'' \emph{Transportation Research Part C: Emerging
  Technologies}, vol.~46, pp. 46 -- 64, 2014.

\bibitem{swaroop1994comparision}
D.~Swaroop, J.~Hedrick, C.~Chien, and P.~Ioannou, ``A comparision of spacing
  and headway control laws for automatically controlled vehicles,''
  \emph{Vehicle System Dynamics}, vol.~23, no.~1, pp. 597--625, 1994.

\bibitem{lin1997nnir}
S.-K. Lin and C.-J. Fang, ``Nonovershooting and monotone nondecreasing step
  responses of a third-order siso linear system,'' \emph{IEEE Transactions on
  Automatic Control}, vol.~42, no.~9, pp. 1299--1303, Sep 1997.

\bibitem{its-2016zheng}
Y.~Zheng, S.~E. Li, J.~Wang, D.~Cao, and K.~Li, ``Stability and scalability of
  homogeneous vehicular platoon: Study on the influence of information flow
  topologies,'' \emph{IEEE Transactions on Intelligent Transportation Systems},
  vol.~17, no.~1, pp. 14--26, Jan 2016.

\end{thebibliography}
\bibliographystyle{IEEEtran}

%
%
\appendix{Proof of Theorem~\ref{thm:pf}}\label{appx:pfproof}

\begin{proof} 
\begin{itemize}
    \item[(a)] Consider the error propagation transfer function
\[ 
    H_e(s) := \frac{N(s)}{D(s)} = \frac{k_as^2+k_vs+k_p}{\tau s^3 + s^2 + (k_v + k_ph_w)s + k_p}.
\]
Let $s= j \omega$, so that
\begin{multline*}
	\|H_e(j\omega; \tau)\|^2 = \frac{(k_p - k_a \omega^2)^2 + k_v^2\omega^2}{(k_p-\omega^2)^2+ \omega^2((k_v + k_p h_w - \tau \omega^2)^2}.
\end{multline*}
Note that $D(s)$ is Hurwitz for every $\tau \in [0, \tau_0]$ if and only if 
\begin{equation}
    k_v+k_ph_w > \tau_0 k_p >0 \iff h_w > \tau_0 -\frac{k_v}{k_p}.
    \label{eqn:kastability}
\end{equation}
This is a standard application of the conditions for a third-order polynomial to be Hurwitz. Corresponding to $h_w=0$, \cite{its-2016zheng} provides a similar condition for a constant spacing policy. The above equation is a basic requirement of stability and it imposes a limit on the allowable time headway; in particular, a smaller value of $k_p$ leads to a smaller lower bound for $h_w$. If $k_v\le 0$, then $h_w \ge \tau_0$. Since we want a tighter bound on $h_w$, we must choose $k_v \ge 0$. 


For showing necessity, we first prove that $k_a \in (0,1)$. Define \[ \omega_0 := \sqrt{{k_v+h_wk_p}/{\tau}}.\] 
If $k_a>1$, consider the frequency frequency $\omega_0$, and any $\tau < \min\{h_w, \tau_0\}$.
Then $\omega_0^2 > k_p > \frac{k_p}{k_a}$.
Hence, $(\omega_0^2 - \frac{k_p}{k_a}) > \omega_0^2 - k_p >0$, and 
\begin{multline*}
	\|H_e(j\omega_0; \tau)\|^2 = k_a^2 \frac{(\frac{k_p}{k_a} - \omega_0^2)^2 + \frac{k_v^2}{k_a^2}\omega_0^2}{(k_p-\omega_0^2)^2}\ge k_a^2 >1.
\end{multline*}
If $k_a = 1$, it is clear that 
$\|H_e(j\omega_0; \tau)\|^2 >1$ if $k_v^2 \ge (k_v+k_ph_w-\tau \omega^2)^2$ for some $\omega$; this would clearly be the case if $\omega^2 \in (\frac{k_ph_w}{\tau}, \frac{k_ph_w + 2k_v}{\tau})$.
Therefore, $k_a \in (0,1)$ and $1-k_a^2 > 0$. 

Now we will show the bound on $h_{\min}$. Consider 
\[ 
    H_e(s) := \frac{N(s)}{D(s)} = \frac{k_as^2+k_vs+k_p}{\tau s^3 + s^2 + (k_v + k_ph_w)s + k_p}.
\]
Let $s=j\omega$, then,
\begin{equation*}
	\|H_e(j\omega; \tau)\|^2=\frac{ (k_p-k_a\omega^2)^2 + k_v^2\omega^2}{(k_p-\omega^2)^2 + \omega^2(k_v+h_wk_p - \tau \omega^2)^2}.
\end{equation*}
Let us define 
\begin{align*}
    \Delta_N(\omega^2; \tau) &:= (k_p-k_a\omega^2)^2 + k_v^2\omega^2, \ \mbox{and} \\ 
    \Delta_D(\omega^2; \tau) &:=(k_p-\omega^2)^2 + \omega^2(k_v+h_wk_p - \tau \omega^2)^2. 
\end{align*}
Then 
\[ \|H_e(j\omega; \tau)\| \le 1 \iff \Delta_N(\omega^2; \tau)- \Delta_D(\omega^2; \tau) \ge 0, \ \forall \omega. \] 
Substituting for $\Delta_N$, $\Delta_D$ and simplifying,
\begin{multline}
    	\tau^2 \omega^4 + \omega^2[(1-k_a^2)-2 \tau (h_w k_p + k_v)]\\ 
    	+ (k_v+h_wk_p)^2-k_v^2-2k_p(1-k_a) \ge 0 .
    	\label{eqn:kaperturbed}
\end{multline} 
The above inequality is a bi-quadratic inequality; for $\tau=0$, this inequality corresponds to  
\begin{equation}
	(k_v+h_wk_p)^2-k_v^2-2k_p(1-k_a) \ge 0. 
	\label{eqn:kanominal}
\end{equation}
When $\tau \neq 0$, the bi-quadratic inequality  \eqref{eqn:kaperturbed} holds for all $\omega \in \Re$ and $\tau \in (0, \tau_0]$ if and only if for every $\tau$, 
if either  (A) the relation \( (1-k_a^2)-2 \tau (h_wk_p+k_v) \ge 0 \) holds, or (B) the discriminant of equation \eqref{eqn:kaperturbed} is non-positive.

For case (A), the relation \( (1-k_a^2)-2 \tau (h_wk_p+k_v) \ge 0 \) together with the nominal case in \eqref{eqn:kanominal} implies that 
	\begin{align*}
		(k_v+h_wk_p)^2  &\ge k_v^2+2k_p(1-k_a), \\
	    \Rightarrow (1+k_a)(k_v+h_wk_p)^2 &- 4\tau k_p (k_v+h_wk_p) \\ 
		& \ge k_v^2(1+k_a) \ge k_v^2.
	\end{align*}
Completing the square and noting that $k_v+h_wk_p>0$, we get
{\small
\begin{align*}
		k_v+h_wk_p &\ge \frac{2k_p\tau}{1+k_a} + \sqrt{\frac{4 \tau^2k_p^2}{(1+k_a)^2}+k_v^2} 
		\ge \frac{2k_p\tau}{1+k_a} + k_v, \nonumber \\
\end{align*}
}
which implies, \[ h_w \ge \frac{2 \tau}{1+k_a}. \]
Since this should be true for every $\tau \in (0, \tau_0]$, it must be true in this case that the inequality \eqref{eqn:hwpf} is true.

For case (B), the discriminant of equation \eqref{eqn:kaperturbed} is non-positive; specifically from \eqref{eqn:kaperturbed}:
\begin{align*}
[(1-k_a^2) &-2\tau(h_wk_p+k_v)]^2 \\
&\le 4\tau^2[(k_v+h_wk_p)^2-k_v^2-2k_p(1-k_a)], \\
\Rightarrow &h_wk_p + k_v \ge \frac{(1-k_a)^2+4 \tau^2k_v^2+8 \tau^2k_p(1-k_a)}{4 \tau (1-k_a^2)}, \\
\Rightarrow &h_wk_p \ge \frac{2 \tau}{1+k_a} k_p + \frac{(2 \tau k_v - (1-k_a)^2)^2}{4 \tau (1-k_a^2)}.
\end{align*}
Since the above inequality holds for every $\tau \in [0, \tau_0]$, 
$$h_w \ge \frac{2 \tau_0}{1+k_a}.$$

\item[(b)]
To demonstrate sufficiency, we still have to show the following: For any given $\eta>0$ and $k_a \in (0,1)$, one can find $k_p, k_v$ such that the following three conditions hold in order for $\|H_e(j\omega; \tau)\| \le 1$ for all $\tau \in (0, \tau_0]$:
\vspace*{0.1in}
\begin{enumerate}
\item {Stability:} Equation \eqref{eqn:kastability};
\item Nominal Case $(\tau = 0)$: Equation  \eqref{eqn:kanominal}, and
\item Perturbed Case $(\tau \ne 0)$: the family of polynomials given by Equation \eqref{eqn:kaperturbed}, $\forall \quad \tau \in (0, \tau_0] $ is non-negative.
\end{enumerate} 
\vspace*{0.1in}

\noindent 1) Stability: If $h_w \ge \frac{2 \tau_0}{1+k_a} (1+ \eta)$, 
then $k_p > 0$ implies
	\begin{align*}
		h_w k_p 
		&= \frac{2 \tau_0}{1+k_a} k_p + \frac{2 \tau_0 \eta}{1+k_a} k_p 
		\ge \tau_0 k_p + \frac{2 \tau_0 \eta}{1+k_a}k_p > \tau_0 k_p >0. 
	\end{align*}
	The last inequality follows from 
	\[ k_a \in (0,1) \Rightarrow 1 < 1+k_a < 2 \Rightarrow \frac{2}{1+k_a} > 1. \]
	Since $k_v >0$, it follows that $k_v + h_wk_p > \tau_0 k_p$. 
	The stability condition is readily satisfied by the choice of $k_v, k_p >0$, so it does not impose further restrictions on $k_v$ and $k_p$. 
\vspace*{0.1in}

\noindent 2) Nominal Case: Upon simplification of equation \eqref{eqn:kanominal}, condition (b) (in the statement of Theorem 1) is equivalent to satisfying the inequality
	\( h_w(2k_v + h_w k_p) \ge 2(1-k_a). \)
	The set of $k_v, k_p$ that satisfy the above inequality when $h_w = \frac{2 \tau_0 (1+\eta)}{1+k_a}$ is given by:
		$${\mathcal S}_1 := \{(k_p, k_v): k_p >0, \; k_v>0, \; \frac{k_v}{a_1} + \frac{k_p}{b_1} \ge 1\}, $$	
	where
		$$a_1 := \frac{(1-k_a^2)}{2 \tau_0(1+\eta)}, \quad b_1 := \frac{(1+k_a)^2(1-k_a)}{2\tau_0^2(1+\eta)^2}. $$

\vspace*{0.1in}
\noindent 3) Perturbed Case: The set of $k_v, k_p$ satisfying the inequality $(1-k_a^2)-2 \tau_0(k_v+h_wk_p) \ge 0, $ can be described by
		$${\mathcal S}_2:= \{k_p, k_v): k_p >0, \; k_v >0, \; \frac{k_v}{a_2} + \frac{k_p}{b_2} \le 1 \}, $$
	where 
		$$a_2:= \frac{1-k_a^2}{2 \tau_0}, \quad b_2:=\frac{(1-k_a^2)(1+k_a)}{4 \tau_0^2 (1+\eta)}. $$
	Clearly ${\mathcal S}_1, {\mathcal S}_2 \ne \emptyset$. To ensure that ${\mathcal S}_1 \cap {\mathcal S}_2 \ne \emptyset$,
	we need either $a_1 \le a_2$ or $b_1 \le b_2$. Considering the expressions for $b_1$ and $b_2$, we note that $a_1$ is always less than $a_2$ for any $\eta>0$. Hence, ${\mathcal S}_1 \cap {\mathcal S}_2 \ne \emptyset$ and one can find $(k_p, k_v) \in {\mathcal S}_1 \cap {\mathcal S}_2$.
	\end{itemize}
\end{proof}
\begin{IEEEbiography}{Dr. Swaroop Darbha}
received a Ph. D. in Mechanical Engineering from the University of California at Berkeley in 1994. He is currently a Professor of Mechanical Engineering at Texas A \& M University in College Station, TX.  His research interests include Advanced Vehicular Control and Diagnostic Systems, Motion planning and control of Unmanned Vehicles, Decision making under uncertainty, Fixed Structure Controller Synthesis.
\end{IEEEbiography}
\begin{IEEEbiography}{Shyamprasad Konduri}
receiveed his M.S in Mechanical Engineering from Oklahoma State Universty at Stillwater, OK in 2012. He is currently a doctoral student in the dept. of Mechanical Engineering at Texas A \& M University in College Station, TX. His research is focused on connected and autonomous vehicle systems, and mobile robotics. 
\end{IEEEbiography}
\begin{IEEEbiography}{Dr. Prabhakar R. Pagilla}
received a Ph. D. in Mechanical Engineering from the University of California at Berkeley in 1996. He is currently the James J. Cain Professor II in Mechanical Engineering at Texas A \& M University in College Station, TX. His research interests include modeling and control related problems in robotics/mechatronics, roll-to-roll manufacturing systems, autonomous vehicles, and large-scale nonlinear dynamic systems.
\end{IEEEbiography}

\end{document}